\documentclass[
  reprint,
  superscriptaddress,
  floatfix,
  aps,
  prx,
  amsmath,
  amssymb,
]{revtex4-2}
\usepackage{subfigure} % for subfigures within figures
\usepackage{graphicx}
\usepackage{bm}
\usepackage{braket} % for quantum bra-ket notation
\usepackage[colorlinks=true,allcolors=blue]{hyperref}
\usepackage{amsthm}
\newtheorem{theorem}{Theorem}[section]

\newtheorem{claim}[theorem]{Claim}

\newtheorem{proposition}[theorem]{Proposition}

\theoremstyle{definition} % Flips the switch
\newtheorem{definition}[theorem]{Definition}

\theoremstyle{remark}

\graphicspath{{figs/}}

\begin{document}

\title{\textit{Lifting}: Quantum State Discrimination Framework for Mid-Circuit Measurements}

\author{Chi Zhang}
\email[Contact author: ]{czh4@iu.edu}
\affiliation{Indiana University Bloomington, Bloomington, Indiana, USA}

\author{Neel Vora}
\email[Contact author: ]{NRVora@lbl.gov}
\affiliation{Lawrence Berkeley National Laboratory, Berkeley, California, USA}

\author{Yilun Xu}
\email[Contact author: ]{yilunxu@lbl.gov}
\affiliation{Lawrence Berkeley National Laboratory, Berkeley, California, USA}

\author{Gang Huang}
\email[Contact author: ]{ghuang@lbl.gov}
\affiliation{Lawrence Berkeley National Laboratory, Berkeley, California, USA}

\author{Fan Chen}
\email[Contact author: ]{fc7@iu.edu}
\affiliation{Indiana University Bloomington, Bloomington, Indiana, USA}

\author{Lei Jiang}
\email[Contact author: ]{jiang60@iu.edu}
\affiliation{Indiana University Bloomington, Bloomington, Indiana, USA}

\date{\today}

\begin{abstract}
%!TEX root = main.tex
% abstract 
% smooth, professional prose and confident tone
% never semicolon
% never acronym without explanation. 
% 1. QC is important.
% 2. MCM is important but is indered by noise on NISQ hardware 
% 3. MCM relies on accurate QSD
% 5. previous NN QSD is classification problem and bounded by Bayes risks
% 6. state deterioration is complementary challenge to bayes risk
% 7. how we address 5 and 6 in Lifting and point out rigorous theoretical fundation
% 8. evaluated across 3 IBM processors, achieved up to XX% improvement in representative MCM workloads. (pick highest improvement)

Quantum computing promises new capabilities beyond classical computation. Mid-circuit measurements (MCMs) enable error correction and outcome-dependent control, but their reliability is limited by noise on superconducting hardware. These operations depend on accurate quantum state discrimination (QSD), which assigns measured signals to state labels. Neural network-based (NN-based) QSD improves this assignment, yet its error remains bounded below by the Bayes risk of the available readout distributions. A complementary challenge is state deterioration as successive measurements accumulate and circuit execution advances, making static calibration less representative of the workload. We present \textit{Lifting}, a QSD framework that addresses both challenges through repetition encoding, neural decoding, and \textit{in situ} calibration. Encoding supplies joint in-phase and quadrature (I--Q) readout signals from the target and auxiliary qubits, while contextual calibration trains the decoder for the conditions encountered during execution. Our theoretical analysis shows that retaining additional channels from a fixed joint distribution cannot increase Bayes risk and establishes sufficient conditions for contraction of an overlap-based upper bound. Experiments across three IBM processors evaluate the framework on repeated stabilizer measurement, parity-check, and teleportation workloads. 
\end{abstract}

\maketitle

%!TEX root = main.tex
\section{Introduction}\label{sec:introduction}
%unify terminology
% Neural network-based (NN-based) QSD
% joint I--Q readout signals/distributions/vectors for the encoded signals, rather than nested readout vectors
% Neural decoder/decoding rather than joint-observation
% fix state deterioration for all 
% do not write acronyms without introduction and explanation (e.g. SPAM)
% use \textit{Lifting} for the name of framework, and do not use Lift (in situ) NN (in-situ) in results.tex, use \textit{Lifting} with \textit{in situ} calibration instead

%tone
%be professional, but confident about the claims of contribution and the weakness of the previous works
%transition should be smooth in meaning between every graph
%write clear prose, never write sentences with connected by semicolon (;)

% find the references (at premier vuenues like Nature, Science, PRL or PRX Quantum) that support my claims
% paragraph 1 QC and challenges
% 1. Quantum Computing shows promises that classical computing doesn't. 
% 2. MCM circuits constitue a significant part of SOTA QC. 
% 3. However, NISQ hardware is noisy, and (mid-circuit) measurement is the most error-prone part on superconducting hardware. this hinders MCM thus the premises of QC. 
Quantum computing promises new approaches to computational tasks that are difficult for classical computers, motivating sustained progress in superconducting quantum processors~\cite{aruteQuantumSupremacyUsing2019}. Realizing this promise requires reliable operations throughout a computation. Mid-circuit measurements (MCMs) and real-time feedback are central to this effort, enabling quantum error correction and dynamic circuits that adapt their operations to measurement outcomes~\cite{sivakRealtimeQuantumError2023,corcolesExploitingDynamicQuantum2021}. However, measurement remains a major source of error on superconducting hardware. Its effects extend beyond an incorrect reported outcome to disturbances of the measured qubit and neighboring qubits, limiting the reliability of subsequent computation~\cite{hothemMeasuringErrorRates2025}.

%paragrph 2 MCM and QSD
%1. introduce what MCM is. 
%2. explain how measurement works in MCM on superconducting hardware
%3. relate QSD to classification problems. 
%4. tell people what in accurate QSD can cause 
An MCM measures a qubit before circuit execution ends, allowing its outcome to guide subsequent operations. On superconducting hardware, dispersive readout obtains this outcome by probing a resonator whose response depends on the qubit state. The resulting microwave signal is demodulated into in-phase and quadrature (I--Q) components. Quantum state discrimination (QSD) then assigns these continuous signals to discrete state labels, typically $\ket{0}$ and $\ket{1}$~\cite{blaisCircuitQuantumElectrodynamics2021}. This makes QSD a classification problem at the interface between quantum evolution and classical control. An incorrect assignment can select the wrong feedback operation or produce an incorrect correction frame, allowing a readout error to affect the remainder of the circuit.
% paragraph 3 Neural network-based QSD and bayes risk
% 1. SOTA neural netork-based QSD (cite what you already cited) achievements and strengths. 
%2. however, NN-based QSD is fundamentally bounded by  bayes risk
%3. introduce what Bayes risk is 
%4. and point out Bayes risk is a hard lower bound in discrimination error, regardless of training and architecture of the NN in NN-based QSD
%5. moving beyond bayes risk of single NN needs design rather than NN architecture

Neural network-based (NN-based) QSD has improved this interface in several complementary ways. Lienhard et al. improved multiplexed discrimination by accounting for readout crosstalk~\cite{lienhardDeepNeuralNetworkDiscriminationMultiplexed2022}, Maurya et al. developed a more compact architecture using matched filters and a neural network~\cite{mauryaScalingQubitReadout2023}, and Vora et al. implemented low-latency discrimination on a field-programmable gate array (FPGA) for real-time control~\cite{voraMLPoweredFPGAbasedRealTime2024}. Nevertheless, every discriminator operating on the same readout signals is subject to the same fundamental limit. The Bayes risk is the minimum achievable classification error for given class distributions and prior probabilities. When those distributions overlap, even an optimal classifier must sometimes assign the wrong label. Increasing network capacity or improving training cannot reduce error below this limit~\cite{devroyeProbabilisticTheoryPattern1996}. Reducing error below the Bayes risk of single-qubit readout therefore requires changing the information available to the decoder, rather than improving the neural network alone.

% paragraph 4 state deterioration 
% 1. a complementary challenge is state deterioration. refer to the figure. 
% 2. identify that this is different from the long-term drift as \cite{korhonenPracticalTechniquesHighprecision2025,proctorDetectingTrackingDrift2020}
% 3. only recent literature \cite{zhaoSingleStepPhaseEngineeredPulse2026,santosDriftresilientMidcircuitMeasurement2025} have identified this type of state deterioration. 
% 4. blame this to context (major underlying physics is  residual photons accumulate in resonator cavity over time)
% 5. the context-depent nature of MCM circuits calls out in-situ calibration

A complementary challenge is state deterioration during circuit execution, reflected here in changing I--Q distributions and declining discrimination accuracy. Calibration obtained from isolated state-preparation-and-measurement (SPAM) circuits can consequently become poorly matched to the signals encountered within the workload. Figure~\ref{fig:main_readout} illustrates this behavior in the five-cycle qubit-reuse stabilizer (QR-Stabilizer) workload. In this workload, state deterioration becomes more severe as the number of successive measurements increases and circuit execution advances in time. Later I--Q distributions depart further from the SPAM references, while discrimination accuracy declines. This dependence on circuit history differs from the longer-term parameter drift addressed by calibration-monitoring methods~\cite{proctorDetectingTrackingDrift2020,korhonenPracticalTechniquesHighprecision2025}. Residual photons provide one physical mechanism for such history dependence. Photons remaining in a readout resonator after a measurement can continue to disturb the qubit and interfere with subsequent operations, motivating both established and recent cavity-depletion techniques~\cite{mcclureRapidDrivenReset2016,zhaoSingleStepPhaseEngineeredPulse2026}. Santos and Uzdin also explicitly account for quantum-state degradation during repeated measurements in their mitigation framework~\cite{santosDriftresilientMidcircuitMeasurement2025}. These results motivate treating the measurement context as part of the discrimination problem and acquiring \textit{in situ} calibration that reflects the preceding operations and measurement timing.

% paragraph 5 overview of current paper 
We present \textit{Lifting}, a QSD framework that combines repetition encoding, neural decoding, \textit{in situ} calibration, and optional confidence filtering. Repetition encoding couples the measured qubit to auxiliary qubits and supplies joint I--Q readout signals to the neural decoder. Signals that overlap in a single-qubit I--Q plane can become distinguishable through information carried by the auxiliary channels. This changes the classification problem and creates an opportunity to reduce its Bayes risk. \textit{In situ} calibration addresses the complementary problem of state deterioration by training the decoder on references acquired in the measurement context. Confidence filtering further allows applications to trade retained-shot coverage for more reliable accepted outcomes.

%paragraph 6 identify the contributions
%contributions needs to be general, applicable to the broader field
%do not be obsessed with what we did specifically, because the contributions is to the community
%so do not mention too specific results like distance sweep, QR-stabilizer etc.  
Our contributions connect these principles to practical readout on superconducting hardware. First, we characterize within-circuit state deterioration and establish calibration context as an important consideration for NN-based QSD. Second, we provide a theoretical foundation for decoding joint I--Q readout signals, showing that retaining additional channels from a fixed joint distribution cannot increase Bayes risk and establishing sufficient conditions for contraction of an overlap-based upper bound. Third, we identify a practical encoding choice that balances discrimination accuracy against the cost of auxiliary qubits and encoding operations. Finally, experiments on IBM processors demonstrate how neural decoding and \textit{in situ} calibration improve workload-specific circuit success in representative MCM workloads.

The remainder of this paper is organized as follows. Section~\ref{sec:background} reviews dispersive readout and the Bayes-risk formulation. Section~\ref{sec:phenomenology} examines within-circuit readout-distribution changes. In Section~\ref{sec:framework}, we present the theoretical framework of our joint soft-information protocol and its high-dimensional embedding. Section~\ref{sec:results} details the experimental setup and evaluates the protocol's performance on IBM quantum processors. Finally, Section~\ref{sec:discussion} concludes the paper.

%!TEX root = main.tex
\section{MCM Circuits and the Bayes Risk Bound}\label{sec:background}
This section reviews MCM workloads, dispersive readout, NN-based QSD, and the Bayes-risk limit of a fixed observation space.

\subsection{MCM Circuit}\label{subsec:mcm_circuit}

A mid-circuit measurement is performed before circuit execution ends. Its outcome can guide subsequent gates, update a correction frame, or support repeated syndrome extraction. Reset and reuse provide another application: a measured physical qubit can be reinitialized and assigned a later role, reducing the physical-qubit footprint~\cite{decrossQubitReuseCompilationMidCircuit2023,brandhoferOptimalQubitReuse2023,kimQRMapMapBasedApproach2025}. Reset is therefore part of some MCM workloads, rather than a requirement for every MCM.

Reliable MCM execution depends on both the quantum operations and the interpretation of their readout. An assignment error changes the inferred classical outcome and can lead to incorrect feedback or frame tracking. Reset infidelity concerns the state left by reinitialization. It is a separate error mechanism, even though active reset may use a measured outcome to select a correction. On superconducting devices, measurement backaction, crosstalk, and decoherence during measurement and feedback delays can further affect circuit performance~\cite{zhongCompilationFrameworkQuantum2025,hothemMeasuringErrorRates2025}. The workloads considered here cover repeated measurement and reuse, parity extraction, and teleportation with corrections interpreted through a Pauli frame.

\subsection{Dispersive Readout in Dynamic Circuits}

In superconducting qubit systems, dispersive readout is the standard technique for measuring the state of a qubit. This method relies on the interaction between the qubit and a readout resonator, where the qubit state shifts the resonator frequency, allowing the state to be inferred from the transmitted or reflected microwave signal. The Hamiltonian describing this interaction in the dispersive regime is given by
\begin{equation}\label{eq:dispersive_hamiltonian}
H/\hbar = \omega_r a^\dagger a + \frac{\omega_q}{2} \sigma_z + \chi a^\dagger a \sigma_z,
\end{equation}
where $\omega_r$ and $\omega_q$ are the resonator and qubit frequencies, respectively, $a^\dagger$ and $a$ are the creation and annihilation operators of the resonator mode, $\sigma_z$ is the Pauli-$z$ operator of the qubit, and $\chi$ is the dispersive shift.

After the interaction, the transmitted or reflected microwave signal carries information about the qubit state. The signal is typically amplified, integrated, and demodulated, yielding a voltage signal whose in-phase and quadrature (I--Q) components encode the qubit state information. On IBM hardware, a readout signal is typically a point on the I--Q plane, as shown in Fig.~\ref{fig:main_readout}(c). A discriminator, such as a threshold or machine-learning-based classifier, processes the I--Q signal to infer the qubit state, typically distinguishing between the ground and excited states.

In an MCM circuit, however, Eq.~\eqref{eq:dispersive_hamiltonian} does not fully capture the measurement context. During circuit execution, the measured qubit can be affected not only by intrinsic relaxation during lifetime $T_1$ and dephasing during time $T_2$, but also by mid-circuit measurement backaction, crosstalk from neighboring qubits, and measurement-induced errors. Residual photons in a readout resonator can also disturb subsequent qubit operations, motivating cavity-depletion techniques~\cite{mcclureRapidDrivenReset2016,zhaoSingleStepPhaseEngineeredPulse2026}. Prior drift studies examine changes in device parameters over calibration windows \cite{korhonenPracticalTechniquesHighprecision2025,proctorDetectingTrackingDrift2020}. CaliQEC instead enables \textit{in situ} device calibration during surface-code operation \cite{fangCaliQECInsituQubit2025}. It addresses a different calibration task from learning a readout discriminator at a particular MCM cycle. In MCM circuits, by contrast, state deterioration reflected in these readout changes can occur on microsecond-level timescales within a single circuit execution.

\subsection{Quantum State Discrimination from SPAM Calibration}
For superconducting platforms, QSD converts the measured analog signal into a classical state assignment. Lienhard et al. demonstrate NN-based QSD for multiplexed readout, improving assignment accuracy by accounting for system-dependent nonidealities such as crosstalk~\cite{lienhardDeepNeuralNetworkDiscriminationMultiplexed2022}. Maurya et al. combine hierarchical matched filters with a compact neural network to reduce the hardware cost of discrimination~\cite{mauryaScalingQubitReadout2023}. Vora et al. implement neural inference on an FPGA to support real-time MCM control~\cite{voraMLPoweredFPGAbasedRealTime2024}. These works address complementary accuracy, scalability, and latency challenges.

Our comparison concerns the calibration data supplied to a discriminator. The SPAM baseline in this paper is trained on isolated prepared-state references. The contextual alternative includes preceding workload operations and matches the measurement cycle. The QR-Stabilizer experiment shows that these acquisition conditions can produce different readout distributions. We evaluate this calibration dependence directly in the QR-Stabilizer experiments.

Encoding-based methods provide a complementary approach. Repetition-style readout schemes add auxiliary measurements and decode their hard assignments~\cite{hicksActiveReadouterrorMitigation2022,guntherImprovingReadoutQuantum2021}. Coherent Pauli checks instead use check circuits, including one-sided checks for measurement-error mitigation~\cite{vandenbergSingleshotErrorMitigation2023}. They are not an ordinary repetition-majority decoder. Soft-information decoding retains analog confidence for code-specific decoding~\cite{hanischSoftInformationDecoding2026}. These distinctions motivate comparing hard repetition decoding with neural decoding of joint I--Q readout signals in the expanded observation space.

\subsection{Bayes Risk Bound of Discrimination}\label{subsec:bayes_risk_bound}

Because QSD is a classification problem, a quantitative way to understand this limitation is through the concept of Bayes risk~\cite{devroyeProbabilisticTheoryPattern1996}, which gives the minimum possible error probability for a classification task. In superconducting-qubit readout, this means that any discriminator assigning an analog measurement observation to a quantum-state label has an error rate no lower than the Bayes risk associated with its observation space, regardless of architecture or training strategy.

Formally, we have
\begin{definition}[Bayes risk]\label{app:bayes_risk}
Given random variables $(X,Y)$ taking values in an observation space $\mathcal{X}$ and a label set $\mathcal{Y}$, respectively, let $g^*$ be the \textit{Bayes classifier} defined by~\cite[p.~5]{devroyeProbabilisticTheoryPattern1996}
\begin{equation}\label{eq:bayes_classifier_general}
    g^*(x) = \arg\max_{y \in \mathcal{Y}} P(Y=y\mid X=x),
\end{equation}
where $P(Y=y\mid X=x)$ is the posterior probability of class $y$ given observation $x$. The Bayes risk $R^*$ is then the probability of error of this optimal classifier:
\begin{equation}\label{eq:bayes_risk_general}
    R^* = P\big(g^*(X) \neq Y\big).
\end{equation}
\end{definition}

When there are only two classes, say $\mathcal{Y}=\{0,1\}$, which is the exact case for single-qubit quantum state discrimination, the Bayes classifier in Eq.~\eqref{eq:bayes_classifier_general} takes the particularly simple form
\begin{equation*}
g^*(x) =
\begin{cases}
1 & \text{if } P(Y=1\mid X=x) \ge 1/2,\\
0 & \text{otherwise}.
\end{cases}
\end{equation*}
For equal class priors, $P(Y=0)=P(Y=1)=1/2$, the corresponding Bayes risk in Eq.~\eqref{eq:bayes_risk_general} has the form
\begin{equation}\label{eq:bayes_risk_binary}
    R^* = \frac{1}{2} - \frac{1}{4}\int|f_0(x) - f_1(x)|dx,
\end{equation}
where $f_y(x)$ is the conditional probability density of the continuous observation $X$ given $Y=y$. For any classifier $g(x)$, its error probability satisfies $R(g) \ge R^*$\cite[Theorem 2.1]{devroyeProbabilisticTheoryPattern1996}.

For $d$ readout channels, let $\mathbf{x}\in\mathbb{R}^{2d}$ collect their joint I--Q readout signals, let $\pi_y=P(Y=y)$ be the class prior, and let $f_y^{(d)}$ be the corresponding class-conditional density. The Bayes risk can also be written as
\begin{equation}
    R_{2d}^{*}
    =
    \int_{\mathbb{R}^{2d}}
    \min\{\pi_0 f_0^{(d)}(\mathbf{x}),
          \pi_1 f_1^{(d)}(\mathbf{x})\}
    \,d\mathbf{x}.
    \label{eq:bayes_risk_min_form}
\end{equation}
Because the Bayes risk is a fundamental lower bound on the error probability, it limits how well any quantum state discriminator can perform on single-qubit data, regardless of the sophistication of the training strategy or architecture. Even \textit{in situ} training at a particular measurement cycle can only move the discriminator toward the cycle-specific lower bound. It cannot reduce the error below the Bayes risk determined by the overlapping distributions. Thus, once MCM dynamics cause the analog distributions for $\ket{0}$ and $\ket{1}$ to overlap, no discriminator trained on the same single-qubit observation space can reduce its error below $R^*$.

The next section makes this limitation concrete by examining state deterioration through changes in the readout distribution of a reused qubit relative to SPAM calibration during a representative IBM superconducting-hardware MCM circuit.

%!TEX root = main.tex
\section{Phenomenology of State Deterioration in MCM Circuits}\label{sec:phenomenology}

We now examine the state deterioration that occurs in MCM circuits on IBM superconducting-qubit hardware. The background in Section~\ref{sec:background} establishes that SPAM-trained discriminators are limited by the Bayes risk of their observed single-qubit I--Q distributions. Here, we show that the relevant I--Q distributions in an MCM circuit can differ sharply from the SPAM calibration distribution used for training.

\begin{figure*}[t]
    \centering

    \begin{minipage}[t]{0.30\textwidth}
        \textbf{(a)}\\
        \vspace{-0.7cm}
                \centering
        \includegraphics[width=\linewidth]{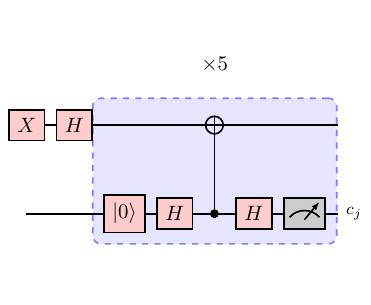}
    \end{minipage}
    \hfill
    \begin{minipage}[t]{0.31\textwidth}
        \textbf{(b)}\par
        \vspace{0.1cm} % Minor nudge to visually align the plot box with the floating circuit
        \includegraphics[width=\linewidth]{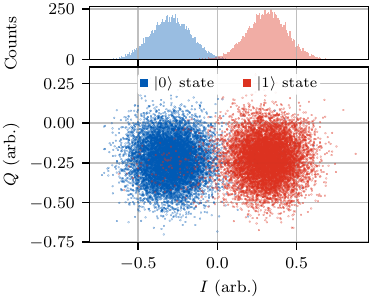}
    \end{minipage}
    \hfill
    \begin{minipage}[t]{0.31\textwidth}
        \textbf{(d)}\par
        \vspace{0.1cm}
        \includegraphics[width=\linewidth]{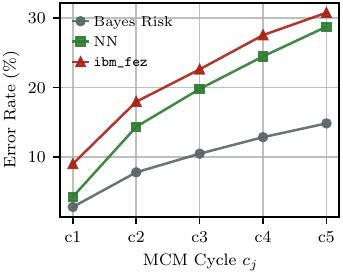}
    \end{minipage}

    \vspace{0.4cm}

    \begin{minipage}[t]{\textwidth}
        \textbf{(c)}\par
        \includegraphics[width=\linewidth]{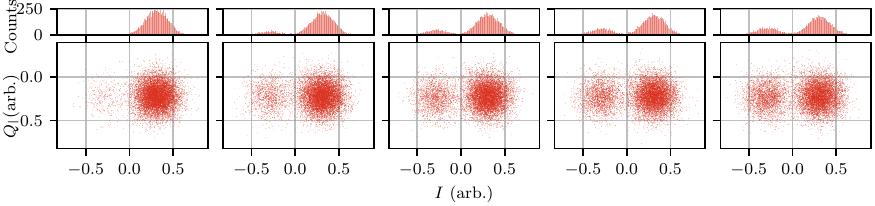}
    \end{minipage}

    \caption{State deterioration reflected in readout-distribution changes in the five-cycle QR-Stabilizer workload. (a) Core circuit with a measured data qubit and a phase-kickback ancilla. \textit{Lifting} proxy wires are omitted. (b) SPAM references for preparation labels $0$ (blue) and $1$ (red) on qubit 51 of \texttt{ibm\_fez}. 10,000 shots per class are sampled from the reference acquisitions for display. (c) Target-readout I--Q distributions ordered from cycles $c_1$ to $c_5$. (d) Equal-prior kernel density estimation (KDE) Bayes-risk estimates comparing the SPAM-$0$ reference with each ideal-$1$ target cycle, SPAM-trained NN-based QSD error rates, and transferred digital-error estimates. The curve labeled \texttt{ibm\_fez} transfers the accuracy offset between NN-based QSD and digital readout from a separate acquisition. It is not native digital readout of the same target shots.}
    \label{fig:main_readout}
\end{figure*}

The QR-Stabilizer readout distribution departs increasingly from the SPAM reference as the circuit proceeds, with the final-cycle observations spread broadly across the I--Q plane (Fig.~\ref{fig:main_readout}). We observe this trend on \texttt{ibm\_fez} using the acquisitions described in Appendix~\ref{app:workload_definitions}. This evidence of state deterioration motivates matching calibration to the measurement context.

Possible contributors to state deterioration and the observed distributional changes include relaxation and dephasing of the measured data qubit, measurement backaction, and gate errors. Residual photons in the readout resonator can also perturb subsequent operations~\cite{mcclureRapidDrivenReset2016,zhaoSingleStepPhaseEngineeredPulse2026}. Errors on the phase-kickback ancilla can also affect the data qubit through their two-qubit interaction. The present data do not separately isolate these mechanisms.

To characterize state deterioration through changes in the MCM readout signals in Fig.~\ref{fig:main_readout}(c) relative to the SPAM calibration data in Fig.~\ref{fig:main_readout}(b), we compute the Bayes risk between the distribution of the readout signals in the MCM circuit and the distribution of the SPAM calibration signals. If we denote the distribution of the MCM readout signals by $f^{c_j}_{1}$, where the ideal result for the measured data qubit is $\ket{1}$ at measurement cycle $c_j$, and the distribution of the SPAM calibration signals by $f^{\rm SPAM}_{0}$, then according to Eq.~\eqref{eq:bayes_risk_binary}, the Bayes risk is
\begin{equation*}
    R^* = \frac{1}{2} - \frac{1}{4}\int_{\mathbf{x}\in\mathbb{R}^2} |f^{c_j}_{1}(\mathbf{x}) - f^{\rm SPAM}_{0}(\mathbf{x})| d\mathbf{x}.
\end{equation*}

Figure~\ref{fig:main_readout}(d) shows increasing SPAM-trained NN-based QSD error alongside the cycle-dependent overlap estimate and the transferred digital-error estimate. The overlap calculation pairs a static $0$ reference with each ideal-$1$ target cycle, whereas the NN-based QSD error is evaluated on target-workload shots. Their separation is therefore not an estimate of the classifier's excess risk above a Bayes optimum for the same task.

The evidence of state deterioration in Fig.~\ref{fig:main_readout}(b,c) motivate \textit{in situ} calibration, whose benefit is tested directly in Section~\ref{subsec:insitu_cal}. Independently of this experiment, Bayes risk limits classification for any fixed set of observations. Adding an auxiliary readout offers a way to change those observations and potentially reduce their overlap. The gap between the plotted curves alone does not establish that this change is necessary.

%!TEX root = main.tex
\begingroup
\raggedbottom
\section{Framework}\label{sec:framework}
\textit{Lifting} combines context-matched calibration, repetition encoding, and neural decoding. Calibration adapts the decoder to the workload, while encoding supplies auxiliary observations that can distinguish outcomes overlapping in a single-qubit I--Q plane. We first describe these components, then derive an overlap bound and introduce optional confidence selection for evaluated shots.

\begin{figure*}[!t]
    \centering
    \includegraphics[width=\textwidth]{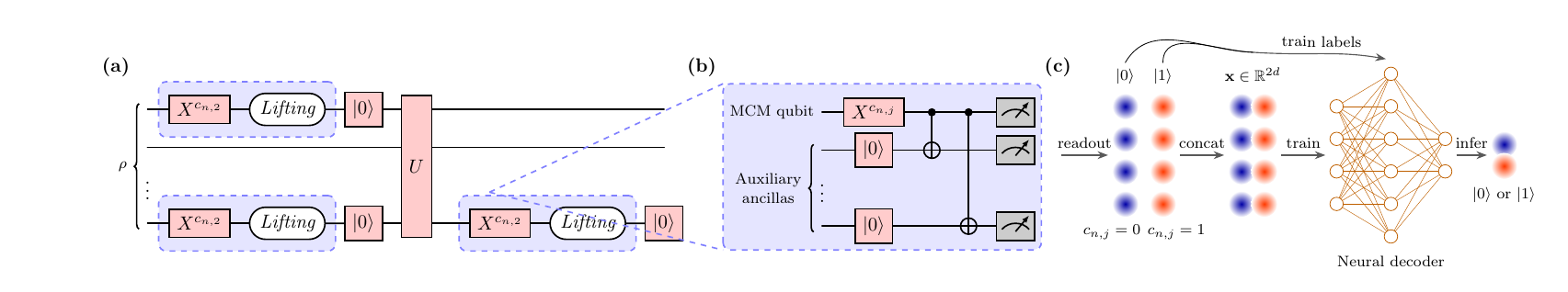}
    \caption{\textit{Lifting} workflow. (a) Context-matched calibration executions prepare reference labels and place encoded readout at the workload's measurement locations. (b) Local repetition encoding couples the labeled MCM qubit to $d-1$ auxiliary ancillas initialized to $\ket{0}$. Concurrent readout supplies their joint I--Q readout signals. (c) The $d$ I--Q pairs are concatenated into a $2d$-component feature vector. Reference labels supervise neural training, and inference assigns a logical outcome to each target observation. Optional confidence selection retains a subset of observations before reporting retained-shot accuracy.}
    \label{fig:overview}
\end{figure*}

\subsection{\textit{In Situ} Calibration for State Deterioration}\label{subsec:insitu_cal}

The QR-Stabilizer measurements in Section~\ref{sec:phenomenology} show state deterioration through changes in readout distributions within a circuit. We therefore acquire calibration data at the workload's measurement locations, including the preceding operations that shape the readout context. This procedure aims to approximate the conditions seen by target shots. Preparation differences and hardware variability can still produce a mismatch.

The calibration circuits in Fig.~\ref{fig:overview}(a) reproduce the relevant circuit depth, spectator activity, and measurement timing while preparing reference labels $y\in\{0,1\}$. For a single readout channel, they produce the labeled dataset
\begin{equation}\label{eq:insitu_cal_datasets}
    \mathcal{D}_{\mathrm{cal}} = \left\{ \left( \mathbf{x}_i, y_i \right) \right\}_{i=1}^{n},
\end{equation}
where each $\mathbf{x}_i \in \mathbb{R}^{2}$ represents a single-shot measurement vector, and $y_i \in \{0, 1\}$ is its corresponding known preparation label.

Figure~\ref{fig:insitu_calibration} compares the SPAM-trained and cycle-matched single-channel decoders on QR-Stabilizer target shots. The visible improvement is concentrated in the early measurements, while later-cycle errors remain high in this experiment. This limited benefit motivates adding readout information as well as changing calibration. The constructed risk curve provides a separate overlap diagnostic rather than a matched lower bound for these target-shot errors.

\subsection{Repetition Encoding}\label{subsec:rep_encoding}

For a given MCM circuit, \textit{Lifting} leaves the circuit unchanged except at measurement locations. Immediately before measuring a target qubit, the protocol allocates $(d-1)$ auxiliary physical qubits and forms a $(d,1)$ repetition-style encoding, as shown in Fig.~\ref{fig:overview}(b). If the target qubit is

\begin{equation*}
    \ket{\psi}=a\ket{0}+b\ket{1},
\end{equation*}
then we apply $(d-1)$ controlled-NOT (CNOT) gates from the target to the auxiliary qubits, producing
\begin{equation}
    a\ket{0}+b\ket{1}
    \longmapsto
    a\ket{0}^{\otimes d}+b\ket{1}^{\otimes d}.
    \label{eq:lifting_encoding}
\end{equation}
This encoding is the same physical resource used by active readout-error mitigation and repetition-code-based schemes \cite{hicksActiveReadouterrorMitigation2022,guntherImprovingReadoutQuantum2021,sohnQuantumErrorMitigation2026}. The difference is in the decoder. Instead of independently thresholding each qubit and applying majority voting, all $d$ qubits are measured concurrently and the joint I--Q readout vector is decoded.

The auxiliary qubits must be initialized to $\ket{0}$ before the CNOT fanout. When auxiliaries are reused within a circuit, a reset prepares them for the next encoded measurement. Imperfect initialization and added gates can reduce the benefit of the auxiliary observations, so their costs must be assessed on hardware.

The $d$ encoded qubits are then measured simultaneously using multiplexed dispersive readout. Each shot produces $d$ pairs of in-phase and quadrature values,
\begin{equation}
    \mathbf{x}^{(d)}=(I_1,Q_1,\ldots,I_d,Q_d)\in\mathbb{R}^{2d}.
    \label{eq:lifting_feature}
\end{equation}
This vector is the basic input to analog filtering and neural decoding. With the calibrated distributions and the joint I--Q readout space defined, we can now derive the limits of their distinguishability.

The repetition encoding in Fig.~\ref{fig:overview}(b) also applies to the \textit{in situ} calibration circuits in Fig.~\ref{fig:overview}(a). Replacing their ordinary measurements with encoded readout produces $2d$-dimensional joint I--Q readout vectors for training. Equation~\eqref{eq:insitu_cal_datasets} then becomes
\begin{equation}\label{eq:insitu_cal_datasets_dim_2d}
     \mathcal{D}_{\mathrm{cal}} = \left\{ \left( \mathbf{x}_i^{(d)}, y_i \right) \right\}_{i=1}^{n},
\end{equation}
where each $\mathbf{x}_i^{(d)}\in\mathbb{R}^{2d}$ is the $2d$-dimensional joint I--Q readout vector corresponding to the $i$th shot and $y_i\in\{0,1\}$ is the associated class label.

\begin{figure}[!t]
    \centering
    \includegraphics[width=0.78\columnwidth]{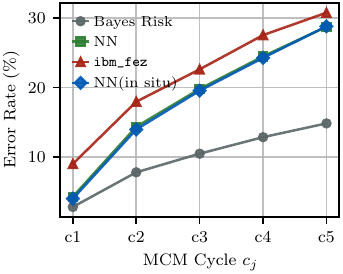}
    \caption{Error rates across five MCM cycles: the KDE Bayes-risk estimate, SPAM-trained NN-based QSD, the transferred \texttt{ibm\_fez} digital estimate from Fig.~\ref{fig:main_readout}(d), and single-channel NN-based QSD with cycle-matched \textit{in situ} calibration. The risk curve compares constructed reference--target distributions. The decoder curves evaluate target shots. Contextual calibration changes the training references, while this comparison retains the single-channel observation space.}
    \label{fig:insitu_calibration}
\end{figure}

\subsection{Upper Bounds on the Joint Bayes Risk}

An additional qubit can separate outcomes that appear indistinguishable in the original observation space. Let $f_y^{(d)}$ denote the density of the first $d$ readout channels for label $y$, and let $\rho_d=\mathrm{BC}(f_0^{(d)},f_1^{(d)})$ be their Bhattacharyya coefficient (BC), which measures distributional overlap~\cite{fukunagaIntroductionStatisticalPattern1990}. The following result separates the benefit of retaining more observations from the stronger assumption that each added channel uniformly contracts overlap.

\begin{theorem}\label{app:main}
Consider joint I--Q readout vectors consisting of the first $d$ channels of one fixed joint distribution, with equal class priors and class-conditional densities. The optimal error $R_{2d}^*$ cannot increase when another channel is retained. If the conditional overlap of each added channel $k$ is uniformly bounded by $\gamma_{\max,k}<1$, then
\begin{equation}
    R_{2d}^{*}\le\frac{1}{2}\rho_d
    \le\frac{1}{2}\rho_1\prod_{k=2}^{d}\gamma_{\max,k}.
    \label{eq:thm_geometric_bound}
\end{equation}
If all factors are bounded by a common $q<1$, this upper bound decreases geometrically as $\frac{1}{2}\rho_1q^{d-1}$.
\end{theorem}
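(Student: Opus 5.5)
The proof has three parts: monotonicity of the Bayes risk under retaining channels, the Bhattacharyya bound on the error, and a chain-rule factorization of the Bhattacharyya coefficient.

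\emph{Monotonicity.} Start from the min-form of the Bayes risk in Eq.~\eqref{eq:bayes_risk_min_form}. Write $\mathbf{x}^{(d+1)}=(\mathbf{x}^{(d)},\mathbf{z})$ with $\mathbf{z}\in\mathbb{R}^2$ the retained channel. Because the first $d$ channels come from the same fixed joint distribution, $f_y^{(d)}(\mathbf{x}^{(d)})=\int f_y^{(d+1)}(\mathbf{x}^{(d)},\mathbf{z})\,d\mathbf{z}$. The inequality $\min\{\int a,\int b\}\ge\int\min\{a,b\}$, applied pointwise in $\mathbf{x}^{(d)}$ and then integrated, gives $R_{2(d+1)}^*\le R_{2d}^*$. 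Equivalently, any classifier on the first $d$ channels is also a classifier on $d+1$ channels that ignores $\mathbf{z}$, so Definition~\ref{app:bayes_risk} and $R(g)\ge R^*$ already yield this.

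\emph{Bhattacharyya bound.} With $\pi_0=\pi_1=1/2$, use $\min\{a,b\}\le\sqrt{ab}$ inside Eq.~\eqref{eq:bayes_risk_min_form}. This gives $R_{2d}^*\le\frac12\int\sqrt{f_0^{(d)}f_1^{(d)}}=\frac12\rho_d$.

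\emph{Contraction.} First make the hypothesis precise. Write $f_y^{(k)}(\mathbf{x}^{(k)})=f_y^{(k-1)}(\mathbf{x}^{(k-1)})\,f_y(\mathbf{z}_k\mid\mathbf{x}^{(k-1)})$. Define the conditional overlap of channel $k$ as
\begin{equation*}
\gamma_k(\mathbf{x}^{(k-1)})=\int\sqrt{f_0(\mathbf{z}_k\mid\mathbf{x}^{(k-1)})\,f_1(\mathbf{z}_k\mid\mathbf{x}^{(k-1)})}\,d\mathbf{z}_k .
\end{equation*}
The assumption is $\sup_{\mathbf{x}^{(k-1)}}\gamma_k\le\gamma_{\max,k}$. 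Apply Fubini:
\begin{equation*}
\rho_k=\int\sqrt{f_0^{(k-1)}f_1^{(k-1)}}\,\gamma_k(\mathbf{x}^{(k-1)})\,d\mathbf{x}^{(k-1)}\le\gamma_{\max,k}\,\rho_{k-1}.
\end{equation*}
Inducting from $k=2$ to $d$ gives $\rho_d\le\rho_1\prod_{k=2}^d\gamma_{\max,k}$. If every $\gamma_{\max,k}\le q<1$, the bound becomes $\frac12\rho_1q^{d-1}$, which decreases geometrically.

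\emph{Main obstacle.} The delicate step is the factorization. It needs conditional densities to exist, with the convention that $\gamma_k$ is taken only where both $f_y^{(k-1)}>0$; the set where one density vanishes contributes nothing to $\sqrt{f_0f_1}$. It also needs the uniform bound to hold over all prefixes, not merely on average. An average bound would only give the weaker weighted statement. I would state the uniformity explicitly as the interpretation of ``uniformly bounded.''
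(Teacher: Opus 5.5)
Your proposal is correct and follows the paper's proof essentially step for step. Both use the integral-of-minimum argument, with the ignore-the-coordinate remark, for monotonicity; $\min\{a,b\}\le\sqrt{ab}$ for the Bhattacharyya bound; and the factorization $\rho_d=\int\sqrt{f_0^{(d-1)}f_1^{(d-1)}}\,\gamma_d\,d\mathbf{u}\le\gamma_{\max,d}\rho_{d-1}$, iterated over channels. The only difference is in how ``uniformly bounded'' is read. The paper requires the bound only almost everywhere with respect to the overlap measure $\nu_{d-1}$ (density $\sqrt{f_0^{(d-1)}f_1^{(d-1)}}$), which is slightly weaker than your supremum over the shared support and yields the same conclusion.
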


\begin{figure}
    \centering
    \includegraphics[width=\columnwidth]{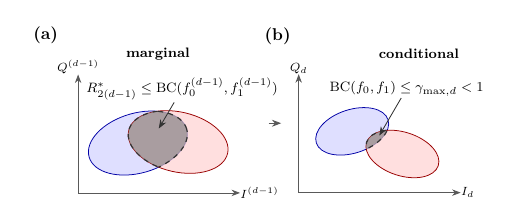}
    \caption{Qualitative geometry of marginal and conditional overlap. (a) The preceding $d-1$ readout channels have marginal Bhattacharyya overlap $\rho_{d-1}=\mathrm{BC}(f_0^{(d-1)},f_1^{(d-1)})$, which upper-bounds balanced-prior Bayes risk after multiplication by $1/2$. (b) Conditional overlap in the added I--Q channel is bounded by $\gamma_{\max,d}$. Shaded regions schematically illustrate the factorization.}
    \label{fig:geometric_bayes_bound}
\end{figure}

To see the geometry, divide the observation into the preceding channels $\mathbf{u}=\mathbf{x}^{(d-1)}$ and the added channel $\mathbf{z}=\mathbf{x}_d$. Their joint density factors as
\begin{equation}\label{eq:lifting_dist_factor}
    f_y^{(d)}(\mathbf{u},\mathbf{z})
    =f_y^{(d-1)}(\mathbf{u})f_y(\mathbf{z}\mid\mathbf{u}).
\end{equation}
At a fixed location $\mathbf{u}$, the added channel provides another view of the two classes (Fig.~\ref{fig:geometric_bayes_bound}). Its conditional overlap $\gamma_d(\mathbf{u})$ weights the overlap remaining in the preceding channels, giving
\begin{equation*}
\begin{split}
    \rho_d
    &=\int\sqrt{f_0^{(d-1)}(\mathbf{u})f_1^{(d-1)}(\mathbf{u})}
       \,\gamma_d(\mathbf{u})\,d\mathbf{u}\\
    &\le\gamma_{\max,d}\rho_{d-1}.
\end{split}
\end{equation*}
Repeating this step gives the product bound. The exact optimal error is non-increasing because a decoder can always ignore the new channel. Strict improvement requires the added observations to split a region previously assigned one label into regions favoring different classes, with positive probability. Appendix~\ref{app:proof} gives the precise conditions and the Gaussian special case.

The theorem compares coordinates of a fixed distribution. Physically increasing the encoding distance also changes gates and measurement conditions, so separately executed distance sweeps need not share the required marginals. The hardware evaluation therefore examines the benefit of joint I--Q readout signals together with the costs of implementing them.

\subsection{Decoding}

\subsubsection{Analog Confidence Filtering}

Confidence filtering selects evaluated shots whose decoded outcomes are sufficiently well supported by their analog observations. It is optional and trades retained-shot coverage for accuracy. It is neither an assumption of Theorem~\ref{app:main} nor a modification of the reference labels used for training. The QR-Stabilizer implementation uses a Gaussian confidence score. The quantum error correction (QEC) workload considered here consists of five parity-check rounds and is denoted QEC-5. This workload and teleportation use neural posterior confidence, as specified in Appendix~\ref{app:workload_decoders}.

Let $\mathcal{D}^0_{\mathrm{cal}}$ and $\mathcal{D}^1_{\mathrm{cal}}$ be the subsets of the \textit{in situ} calibration dataset in Eq.~\eqref{eq:insitu_cal_datasets_dim_2d} containing observations labeled $0$ and $1$, respectively. For QR-Stabilizer, using a multivariate Gaussian approximation for each reference class, we can estimate the mean and covariance of the joint I--Q readout vector:
\begin{align}
    \hat{\boldsymbol{\mu}}_y
    &=
    \frac{1}{|\mathcal{D}^y_{\mathrm{cal}}|}
    \sum_{\mathbf{x}_i\in\mathcal{D}^y_{\mathrm{cal}}}\mathbf{x}_i,\\
    \hat{\boldsymbol{\Sigma}}_y
    &=
    \frac{1}{|\mathcal{D}^y_{\mathrm{cal}}|-1}
    \sum_{\mathbf{x}_i\in\mathcal{D}^y_{\mathrm{cal}}}
    (\mathbf{x}_i-\hat{\boldsymbol{\mu}}_y)
    (\mathbf{x}_i-\hat{\boldsymbol{\mu}}_y)^{T}.
\end{align}
With a small diagonal regularizer $\lambda I$, define the Gaussian score
\begin{equation}\label{eq:decision_score}
\begin{split}
    s_y(\mathbf{x})
    ={}&
    -\frac{1}{2}
    (\mathbf{x}-\hat{\boldsymbol{\mu}}_y)^T
    (\hat{\boldsymbol{\Sigma}}_y+\lambda I)^{-1}
    (\mathbf{x}-\hat{\boldsymbol{\mu}}_y)\\
    &-\frac{1}{2}\log\det(\hat{\boldsymbol{\Sigma}}_y+\lambda I)
    +\log\hat{\pi}_y,
\end{split}
\end{equation}
where $\hat{\pi}_y=|\mathcal{D}^y_{\mathrm{cal}}|/|\mathcal{D}_{\mathrm{cal}}|$ is the empirical class prior. The posterior confidence is
\begin{equation}
    c(\mathbf{x})
    =
    \max_{y\in\{0,1\}}
    \frac{\exp(s_y(\mathbf{x}))}
    {\exp(s_0(\mathbf{x}))+\exp(s_1(\mathbf{x}))}.
    \label{eq:analog_confidence}
\end{equation}
An evaluated QR-Stabilizer shot is retained if $c(\mathbf{x})\ge \tau$ at every measurement cycle, for a chosen threshold $\tau$. Equivalently, one can threshold the absolute log-likelihood ratio $|s_1(\mathbf{x})-s_0(\mathbf{x})|$. This formulation is the analog counterpart of digital detection or postselection: it rejects outcomes whose calibration-derived likelihood is too close to the decision boundary. Unlike majority voting, however, the confidence is computed in the full $2d$-dimensional I--Q feature space.

\subsubsection{Neural Decoding}
The overlap analysis describes the optimal classification limit for the chosen observations. A neural decoder provides a flexible approximation to the corresponding decision rule. The following existence result, proved in Claim~\ref{app:universal_approximation} of Appendix~\ref{app:proof}, motivates this choice without guaranteeing that finite training reaches the optimum.

\begin{proposition}
    For any $\epsilon>0$, there exists a neural network $f_{\boldsymbol{\theta}} \in \mathcal{H}_{2d}$ such that the misclassification risk $R(g_\theta)=P(g_\theta(X)\ne Y)$ of the thresholded classifier $g_\theta(\mathbf{x})=\mathbf{1}\{f_{\boldsymbol{\theta}}(\mathbf{x})\ge1/2\}$ is bounded by $R^*_{2d} + \epsilon$, where $R^*_{2d}$ is the Bayes risk in Theorem~\ref{app:main} and $\mathcal{H}_{2d}$ is a sufficiently expressive family of continuous feedforward neural networks operating on the $2d$-dimensional joint I--Q readout space.
 \end{proposition}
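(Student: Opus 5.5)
The plan is to reduce the classification statement to an $L^1$ approximation of the posterior $\eta(\mathbf{x})=P(Y=1\mid X=\mathbf{x})$ on $\mathbb{R}^{2d}$. The key tool is the standard plug-in inequality~\cite[Theorem 2.2]{devroyeProbabilisticTheoryPattern1996}. For any measurable $\tilde\eta:\mathbb{R}^{2d}\to\mathbb{R}$ and its thresholded rule $g(\mathbf{x})=\mathbf{1}\{\tilde\eta(\mathbf{x})\ge 1/2\}$, it gives
\begin{equation*}
R(g)-R^*_{2d}\le 2\,\mathbb{E}\big[|\tilde\eta(X)-\eta(X)|\big].
\end{equation*}
I would recall this inequality first. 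Its proof is short: one writes
\begin{equation*}
R(g)-R^*_{2d}=\mathbb{E}\big[|2\eta(X)-1|\,\mathbf{1}\{g(X)\ne g^*(X)\}\big],
\end{equation*}
where $g^*$ is the Bayes classifier of Definition~\ref{app:bayes_risk}. On the disagreement event, $|2\eta-1|\le 2|\tilde\eta-\eta|$. It therefore suffices to find a network $f_{\boldsymbol\theta}\in\mathcal{H}_{2d}$ with $\mathbb{E}|f_{\boldsymbol\theta}(X)-\eta(X)|\le\epsilon/2$.

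The approximation proceeds in three steps, with $\mu$ the marginal law of $X$ on $\mathbb{R}^{2d}$.
\begin{enumerate}
\item Since $\eta$ is bounded and measurable, $\eta\in L^1(\mu)$. Compactly supported continuous functions are dense in $L^1(\mu)$ for any finite Borel measure on $\mathbb{R}^{2d}$ (Lusin's theorem, or regularity of Radon measures). Hence there is a continuous $h$ with compact support $K$ and $\mathbb{E}|h(X)-\eta(X)|\le\epsilon/6$. We may clip $h$ to $[0,1]$, since clipping can only decrease the distance to $\eta$.
\item Choose a radius $r$ with $\mu(\mathbb{R}^{2d}\setminus B_r)\le\epsilon/12$, where $B_r\supseteq K$ is a closed ball. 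By the universal approximation theorem (Cybenko, Hornik), there is a feedforward network $f$ with $\sup_{B_r}|f-h|\le\epsilon/6$. This is where the assumption that $\mathcal{H}_{2d}$ is ``sufficiently expressive'' is used: it must contain, for instance, one-hidden-layer networks with a non-polynomial continuous activation and arbitrary width.
\item Outside $B_r$, the function $f$ is uncontrolled. I would compose with a clipping output, for example a sigmoid-type or $\min(\max(\cdot,0),1)$ layer, which is again a network (the latter is exact with ReLU). Then $|f-\eta|\le1$ everywhere, and the tail contributes at most $\mu(\mathbb{R}^{2d}\setminus B_r)\le\epsilon/12$.
\end{enumerate}
Summing the three contributions gives
\begin{equation*}
\mathbb{E}|f-\eta|\le\epsilon/6+\epsilon/6+\epsilon/12<\epsilon/2.
\end{equation*}
Applying the plug-in inequality then yields $R(g_\theta)\le R^*_{2d}+\epsilon$.

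The main obstacle is the mismatch between universal approximation, which holds uniformly on compact sets for continuous targets, and the posterior $\eta$, which is generally discontinuous and lives on all of $\mathbb{R}^{2d}$. Steps 1 and 3 handle this by approximating in $L^1(\mu)$ rather than uniformly, using tightness of $\mu$ and bounded outputs. I expect the care needed here to lie in stating precisely what ``sufficiently expressive'' must include: arbitrary width and a bounded or clippable output layer.

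A more direct alternative avoids the plug-in inequality. One approximates the indicator of the Bayes decision region $\{\eta\ge1/2\}$ in $L^1(\mu)$ by a continuous function, and uses the identity $R(g)=\mathbb{E}[\eta\mathbf{1}\{g=0\}+(1-\eta)\mathbf{1}\{g=1\}]$. The plug-in route is cleaner, and it is the one I would use.
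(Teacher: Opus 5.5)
Your argument is correct, but it takes a different route from the paper's proof of Claim~\ref{app:universal_approximation}.

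The paper approximates the Bayes decision indicator $g^*=\mathbf{1}\{\pi_1f_1\ge\pi_0f_0\}$ rather than the posterior $\eta$. It simply assumes $\mathcal{H}_{2d}$ is dense in $L^1(\mu)$, citing Hornik's theorem for bounded nonconstant activations. It then picks $h_\theta$ with $\int|h_\theta-g^*|\,d\mu<\epsilon/2$ and notes that $|h_\theta-g^*|\ge1/2$ wherever the thresholded decisions differ. It concludes from $R(g_\theta)-R^*_{2d}\le P(g_\theta(X)\ne g^*(X))$ together with Markov's inequality. This is essentially the ``more direct alternative'' you mention at the end. It needs no plug-in lemma, but it puts all the approximation theory into the density hypothesis, and its $h_\theta$ approximates a $0$--$1$ indicator rather than a probability.

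Your version does more work and gains two things. First, you derive the $L^1(\mu)$ approximation from uniform approximation on compacta, density of $C_c$ in $L^1(\mu)$, tightness, and an output clip. This makes ``sufficiently expressive'' concrete and covers unbounded activations such as the ReLU hidden layers the paper actually uses, which the cited bounded-activation theorem does not directly reach. Second, your network approximates $\eta$ with values in $[0,1]$. That matches the paper's reading of the BCE-trained decoder as a posterior estimate, which its confidence filtering for QEC-5 and teleportation relies on.

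One caveat on Step~3. The exact ReLU clip works because it is a nonexpansive projection onto $[0,1]\ni h$, so it cannot enlarge the uniform error on $B_r$. A sigmoid output lacks this property, since $\sigma\circ f$ approximates $\sigma(h)$, not $h$. If you want the sigmoid output of the paper's decoder, truncate $h$ to $[\delta,1-\delta]$, which costs at most $\delta$ in $L^1(\mu)$. Then approximate its log-odds $\log[h/(1-h)]$ uniformly on $B_r$ and use the fact that $\sigma$ is $1/4$-Lipschitz.
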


Using the labeled calibration observations, we train a neural decoder
\begin{equation*}
    f_{\boldsymbol{\theta}}:\mathbb{R}^{2d}\rightarrow[0,1],
\end{equation*}
where $f_{\boldsymbol{\theta}}(\mathbf{x})$ estimates the probability that the logical outcome is $1$. We use a multilayer perceptron (MLP), consistent with prior NN-based QSD methods and FPGA-oriented single-shot discrimination systems \cite{lienhardDeepNeuralNetworkDiscriminationMultiplexed2022,mauryaScalingQubitReadout2023,voraMLPoweredFPGAbasedRealTime2024}. The input layer has $2d$ nodes, one for each I--Q component in Eq.~\eqref{eq:lifting_feature}. The hidden layers use rectified linear activations, and the output layer uses a sigmoid activation for binary logical readout. The MCM experiments use one hidden layer of 32 neurons. The output heads and training settings are specified in Appendix~\ref{app:workload_decoders}.

The decoder is trained on the \textit{in situ} calibration set by minimizing binary cross entropy, as illustrated in Fig.~\ref{fig:overview}(c),
\begin{equation*}
    \mathcal{L}(\boldsymbol{\theta})
    =
    -\frac{1}{n}
    \sum_{i=1}^{n}
    \left[
    y_i\log f_{\boldsymbol{\theta}}(\mathbf{x}_i)
    +(1-y_i)\log(1-f_{\boldsymbol{\theta}}(\mathbf{x}_i))
    \right].
\end{equation*}
During inference, each encoded measurement produces a joint I--Q readout vector $\mathbf{x}\in\mathbb{R}^{2d}$. The logical single-shot output is
\begin{equation*}
    \hat{y}
    =
    \begin{cases}
    1, & f_{\boldsymbol{\theta}}(\mathbf{x})\ge 1/2,\\
    0, & f_{\boldsymbol{\theta}}(\mathbf{x})<1/2.
    \end{cases}
\end{equation*}
The decoder therefore replaces the conventional pipeline of independent qubit discrimination followed by majority voting with a direct learned decision boundary in the joint I--Q readout space.
\endgroup

%!TEX root = main.tex
\section{Experimental Results}\label{sec:results}

\subsection{Hardware Readout Geometry and Empirical Accuracy Model}

We examine the benefit of joint I--Q readout signals on IBM hardware in two steps. First, we compare the target-only and joint I--Q readout distributions from a distance-two acquisition. We then use a fitted empirical model to describe how assignment accuracy changes with the physical encoding distance, where additional gates and readout channels can introduce competing costs.

Figure~\ref{fig:bayes_hardware_val} shows less overlap in the quadratic-discriminant scores computed from joint I--Q readout signals than in the target-only scores. The equal-prior KDE risk estimate falls from 14.7\% for the target channel to 5.9\% for the joint readout. This supports the benefit of retaining auxiliary information from the same acquisition.

For the distance sweep, we use a phenomenological product model inspired by the overlap analysis. Let $\epsilon_{Y,k}$ be the empirical marginal assignment-error input for preparation label $Y$, with $k=0$ denoting the target and $k=1,\ldots,d-1$ its auxiliaries. Define the fitted auxiliary factors
\begin{equation}
    g_{Y,k}=(2\epsilon_{Y,k})^\kappa,
    \qquad 0<\kappa<1,
    \label{eq:kappa_synergy}
\end{equation}
and the modeled class-specific error
\begin{equation}
    \widehat P_{\mathrm{err}}(Y)
    =\epsilon_{Y,0}\prod_{k=1}^{d-1}g_{Y,k}.
    \label{eq:readout_err}
\end{equation}
The factors $g_{Y,k}$ summarize how the marginal error inputs combine in the fit. They are empirical model factors, rather than conditional Bhattacharyya overlaps. Thus, this fitted model and the measured risk reduction illustrate the value of auxiliary signals without establishing the conditional-overlap assumptions of Theorem~\ref{app:main}.

We include an asymmetric attenuation factor to represent the implementation cost of the encoding. With $e_g^{(j)}$ denoting the two-qubit gate-error input for encoding interaction $j$, the balanced-class accuracy model is
\begin{equation}
\begin{split}
A_{\mathrm{fit}}=\frac12\Bigg[&1-\widehat P_{\mathrm{err}}(0)\\
&+\prod_{j=1}^{d-1}(1-\alpha e_g^{(j)})
 \bigl(1-\widehat P_{\mathrm{err}}(1)\bigr)\Bigg].
\label{eq:final_ann}
\end{split}
\end{equation}
This model uses $0\le\epsilon_{Y,k}\le1/2$ and $0\le\alpha e_g^{(j)}\le1$, so the factors and modeled accuracy remain in $[0,1]$. The extra attenuation of the class-one branch represents an asymmetric implementation cost without assigning it to a specific noise mechanism.

\begin{figure}[!htbp]
    \centering
    \includegraphics[width=0.47\textwidth]{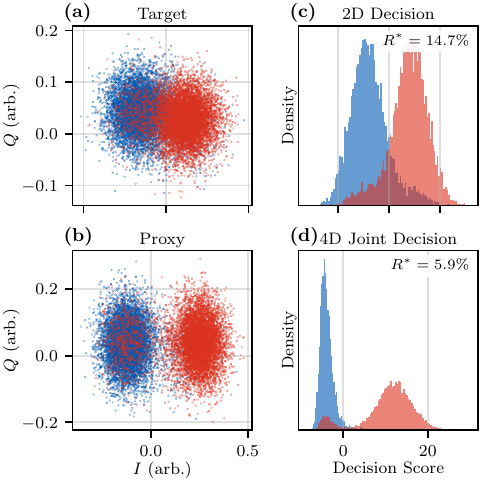}
    \caption{Target-only and joint I--Q readout discrimination for $d=2$ on \texttt{ibm\_fez}, using 10,000 shots per preparation label. (a,b) Target and auxiliary (proxy) I--Q scatter, with blue and red denoting reference labels $0$ and $1$, respectively. (c,d) Signed quadratic-discriminant scores $s_1-s_0$ from the two-dimensional target observations and four-dimensional joint I--Q readout signals. Annotated risks are equal-prior KDE-based estimates of distributional overlap, rather than held-out classifier error rates.}
    \label{fig:bayes_hardware_val}
\end{figure}

Figure~\ref{fig:simulation_results} shows that the fitted model follows the hardware accuracy trend over the tested Fez distances. The parameters $\kappa=0.45$ and $\alpha=9.03$ are fitted to this sweep. The trend is consistent with a balance between useful auxiliary information and the errors introduced by implementing a larger encoding.

\begin{figure}[!t]
\centering
\includegraphics[width=0.45\textwidth]{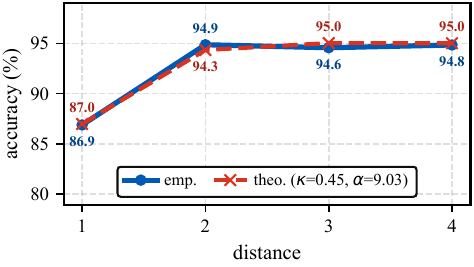}
\caption{IBM Fez assignment accuracy versus encoding distance $d$, comparing hardware measurements (emp.) with the fitted empirical model in Eq.~\eqref{eq:final_ann}. The curve labeled theo. uses $\kappa=0.45$ and $\alpha=9.03$, fitted for the displayed distance sweep.}
\label{fig:simulation_results}
\end{figure}

\subsection{Comparison with Repetition and Temporal Schemes}

We first evaluate how the repetition encoding scales with the code distance $d$ for spatial readout and sequential temporal acquisition. Figure~\ref{fig:exp1_lines_grid} compares the baseline readout, a distance-$d$ repetition code, a temporal repeated-measurement scheme, and our encoding on three IBM superconducting processors. The experiments use neighboring qubits on each device, selecting a target qubit and then adding nearby auxiliary qubits as $d$ increases. The baseline reports the average native assignment accuracy of the selected physical qubits. The spatial repetition comparator independently discriminates the concurrent readouts and applies a hard vote. The temporal comparator applies neural decoding to $d$ sequential observations of one qubit. \textit{Lifting} applies neural decoding to the joint I--Q readout signals from the $d$ encoded qubits in Section~\ref{subsec:rep_encoding}. These schemes increase physical-qubit count or measurement time differently as $d$ grows.

\begin{figure*}[t!]
\centering
\subfigure[\texttt{ibm\_fez}.]{
\centering
\includegraphics[width=2.2in]{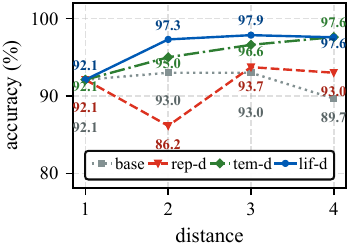}
\label{fig:exp2_ibm_fez_lines}
}
\hspace{-0.1in}
\subfigure[\texttt{ibm\_kingston}.]{
\centering
\includegraphics[width=2.2in]{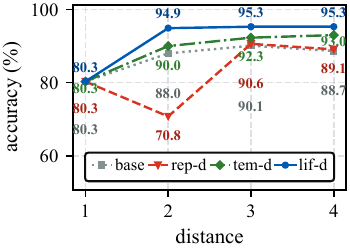}
\label{fig:exp2_ibm_kingston_lines}
}
\hspace{-0.1in}
\subfigure[\texttt{ibm\_marrakesh}.]{
\centering
\includegraphics[width=2.2in]{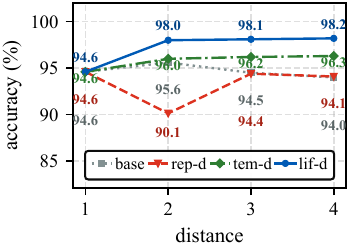}
\label{fig:exp2_ibm_marrakesh_lines}
}
\caption{Assignment accuracy versus distance on (a) \texttt{ibm\_fez}, (b) \texttt{ibm\_kingston}, and (c) \texttt{ibm\_marrakesh}. The baseline averages native assignment accuracy over the selected physical qubits. Spatial repetition (rep-$d$) uses independent hard assignments followed by voting. The temporal comparator (tem-$d$) jointly decodes $d$ sequential I--Q observations of one qubit. \textit{Lifting} (lif-$d$) jointly decodes concurrent I--Q observations from $d$ repetition-encoded qubits. The spatial schemes increase physical-qubit count with $d$, whereas the temporal scheme increases measurement time. Both neural schemes use $2d$ scalar inputs.}
\label{fig:exp1_lines_grid}
\end{figure*}

\begin{figure*}[t!]
\centering
\subfigure[\texttt{ibm\_fez}.]{
\centering
\includegraphics[width=2.2in]{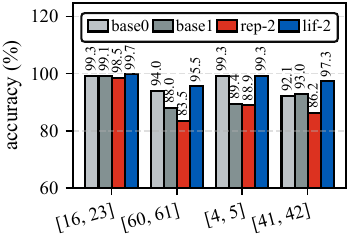}
\label{fig:exp1_ibm_fez_bars}
}
\hspace{-0.1in}
\subfigure[\texttt{ibm\_kingston}.]{
\centering
\includegraphics[width=2.2in]{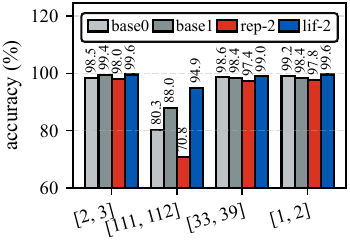}
\label{fig:exp1_ibm_kingston_bars}
}
\hspace{-0.1in}
\subfigure[\texttt{ibm\_marrakesh}.]{
\centering
\includegraphics[width=2.2in]{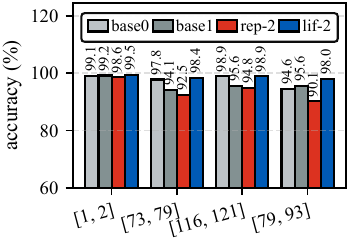}
\label{fig:exp1_ibm_marrakesh_bars}
}
\caption{Distance-two assignment accuracy for physical-qubit pairs on (a) \texttt{ibm\_fez}, (b) \texttt{ibm\_kingston}, and (c) \texttt{ibm\_marrakesh}. Bracketed labels identify the selected qubit pairs. The base0 and base1 bars report native assignment accuracy of the first and second qubits, respectively. The rep-2 bars use hard repetition decoding. \textit{Lifting}, labeled lif-2, uses neural decoding of joint I--Q readout signals from the same two-qubit encoding. Pair selection spans heterogeneous native readout accuracies.}
\label{fig:exp1_bars_grid}
\end{figure*}

The hard-vote comparator benefits unevenly from increasing distance, with a pronounced loss at $d=2$ and no consistent gain from $d=3$ to $d=4$. At even distances, conflicting assignments can produce a tie, so the result depends on how that tie is resolved. Additional gate and readout errors can further offset the benefit of another qubit.

\textit{Lifting} generally has higher assignment accuracy than hard repetition decoding, while the curves coincide at $d=1$. At distance two, the relative error reductions over distance-two hard repetition decoding are 80.4\%, 82.5\%, and 79.8\% on \texttt{ibm\_fez}, \texttt{ibm\_kingston}, and \texttt{ibm\_marrakesh}, respectively. Neural decoding of joint I--Q readout signals can use confidence information lost by independent hard assignments, and these results support distance two as a practical choice with one auxiliary qubit. Larger distances provide smaller or device-dependent gains.

\textit{Lifting} also generally exceeds the temporal comparator, with coincident starting points and a near tie on Fez at the largest distance. Although both neural decoders receive $2d$ scalar features, spatial encoding uses more qubits and entangling gates, whereas temporal acquisition uses repeated measurements and more time. Their different circuits and acquisition conditions make this a comparison of implemented schemes, rather than an isolated test of feature dimension or a single noise mechanism.

\subsection{Distance-Two Accuracy and Auxiliary-Qubit Cost}

\begin{figure}[!ht]
    \centering
    \includegraphics[width=0.9\columnwidth]{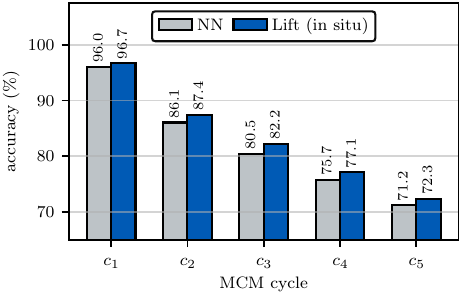}
    \caption{Per-cycle QR-Stabilizer discrimination accuracy on \texttt{ibm\_fez}, in measurement order $c_1$--$c_5$. NN-based QSD uses target shots. \textit{Lifting} with \textit{in situ} calibration uses held-out calibration shots (Appendix~\ref{app:workload_decoders}).}
    \label{fig:qr_stab_acc_per_meas}
\end{figure}

\begin{figure*}[t!]
    \centering
    \includegraphics[width=\textwidth]{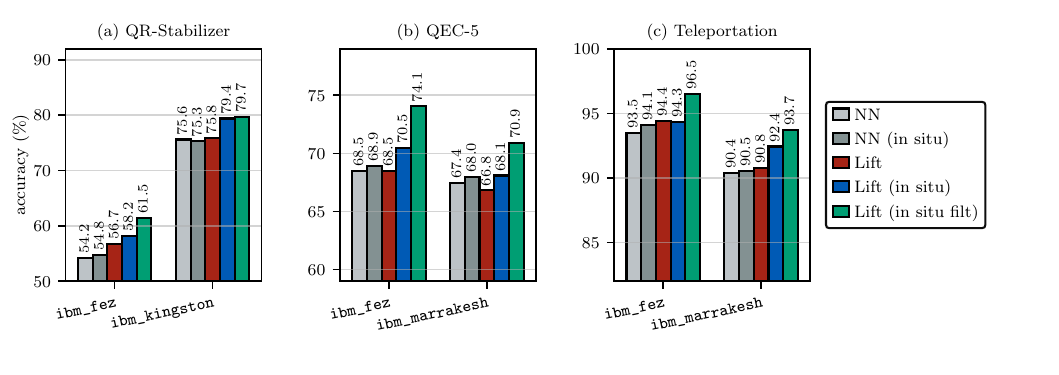}
    \caption{Circuit success rates for (a) QR-Stabilizer, (b) QEC-5, and (c) teleportation. Success requires a correct complete QR-Stabilizer trace, final QEC Pauli-frame bit, or recovered computational-basis input, respectively, as defined in Appendix~\ref{app:workload_definitions}. Green bars report success among retained shots. Decoder configurations and filtering rules are given in Appendix~\ref{app:workload_decoders}.}
    \label{fig:combined_workload_results}
\end{figure*}

Figure~\ref{fig:exp1_bars_grid} examines the selected distance-two operating point across qubit pairs with different native assignment accuracies. The bars \textit{base0} and \textit{base1} identify the individual-qubit baselines, while hard repetition decoding and \textit{Lifting} process the same two-qubit encoding. The pairs include both similar-quality qubits and cases with a substantial difference between their native accuracies.

\textit{Lifting} exceeds hard repetition decoding for every displayed pair, including pairs with unequal native accuracies. Against the individual-qubit baselines, \textit{Lifting} is higher in many cases and approximately ties the stronger qubit in some high-accuracy pairs. The consistent gain over hard repetition supports using analog confidence to resolve ambiguous two-qubit records. With one auxiliary qubit, distance two provides a practical accuracy--resource tradeoff for the subsequent MCM experiments.

\subsection{Performance on MCM Workloads}

\begin{samepage}
We next evaluate distance-two \textit{Lifting} on QR-Stabilizer, QEC-5, and teleportation, using the circuits and reference labels defined in Appendix~\ref{app:workload_definitions}. These workloads test repeated measurement and reuse, parity-based error tracking, and recovery of a teleported computational-basis input, respectively. We distinguish the accuracy of each individual measurement from the success rate of the circuit under its workload-specific criterion.\par
\end{samepage}

Figure~\ref{fig:qr_stab_acc_per_meas} shows the per-cycle discrimination accuracy for each measurement in QR-Stabilizer: both series decline as the circuit proceeds, while \textit{Lifting} with \textit{in situ} calibration remains higher throughout. The downward trend is consistent with the state deterioration identified in Sec.~\ref{sec:phenomenology}, which makes later measurements harder to classify. Because NN-based QSD is evaluated on target shots and \textit{Lifting} on held-out calibration shots, their separation does not establish a paired improvement on the same workload shots.

Figure~\ref{fig:combined_workload_results} instead reports circuit success rates under the criteria in Appendix~\ref{app:workload_definitions}: all QR-Stabilizer outcomes must be correct, QEC-5 must identify the final correction frame, and teleportation must recover the prepared input bit. For QEC-5 and teleportation, corrections are tracked in software rather than applied through physical conditional gates. These criteria test the expected circuit behavior through decoded measurement records. They do not measure corrected-state or arbitrary-state teleportation fidelity.

For QR-Stabilizer, \textit{Lifting} with \textit{in situ} calibration has the highest unfiltered success rate on both devices, whereas contextual calibration alone changes single-channel NN-based QSD little and repetition encoding alone gives a smaller, device-dependent gain. On \texttt{ibm\_fez}, combining encoding and contextual calibration raises the success rate from 54.2\% to 58.2\%. For these QR-Stabilizer experiments, the combined configuration improves success more than single-channel contextual retraining alone.

For QEC-5, \textit{Lifting} with \textit{in situ} calibration also achieves the highest unfiltered success rate on both devices, although its advantage over NN-based QSD with \textit{in situ} calibration is small on \texttt{ibm\_marrakesh}. \textit{Lifting} with static calibration is comparable to SPAM-trained NN-based QSD on \texttt{ibm\_fez} and lower on \texttt{ibm\_marrakesh}. Thus, matching the decoder to the repeated parity-check circuit is important to realizing the benefit of the extra readout channel in these results.

For teleportation, all \textit{Lifting} variants outperform SPAM-trained NN-based QSD on both devices, but the ordering of the unfiltered variants depends on the device. \textit{Lifting} with only Alice's proxy and \textit{Lifting} with both Alice's and Bob's proxies are nearly equal on \texttt{ibm\_fez}, whereas the latter is higher on \texttt{ibm\_marrakesh}. Both variants use \textit{in situ} calibration, so this comparison indicates a device-dependent benefit from adding Bob's proxy readout.

Confidence filtering gives the highest retained-shot success rate in every workload and device shown. It removes ambiguous records, making the accepted outcomes more reliable at the cost of discarding shots. These green bars therefore describe a tradeoff between accuracy and retained-shot coverage, with the workload-specific selection rules given in Appendix~\ref{app:workload_decoders}.

%!TEX root = main.tex
\section{Discussion}\label{sec:discussion}
% abstract 
% smooth, professional prose and confident tone
% never semicolon

% 1. we propose Lifting, a quantum state disrimination framework for MCMs on superconducting NISQ machines.
% 2. evaluated across 3 IBM processors, achieved up to XX% improvement in representative MCM workloads. (pick highest improvement)
% 3. significance to future directions (name some) in NISQ QC. 

We have presented \textit{Lifting}, a quantum state discrimination framework for mid-circuit measurements on noisy superconducting processors. Repetition encoding supplies joint I--Q readout signals that allow the neural decoder to distinguish outcomes that overlap in a single-qubit readout plane. Complementary \textit{in situ} calibration addresses the mismatch associated with state deterioration during circuit execution. The theoretical analysis establishes the benefit of retaining additional readout channels under a fixed joint distribution and identifies sufficient conditions for contraction of an overlap-based upper bound on Bayes risk.

Experiments across three IBM processors demonstrate improved circuit success in representative MCM workloads, with distance two providing a practical balance between accuracy and auxiliary-qubit cost. The largest relative gain is approximately 13.5\% in repeated stabilizer measurement, where confidence-filtered \textit{Lifting} raises circuit success from the statically calibrated NN-based QSD baseline of 54.2\% over all evaluated shots to 61.5\% among retained shots.  Across the workloads, the results support matching both the auxiliary readout channels and calibration context to the circuit.

%\input{src/acknowledgments}
%!TEX root = main.tex
\appendix

\section{Proof of Theorem~\ref{app:main}}\label{app:proof}

We first establish monotonicity of the exact Bayes risk, then derive the conditional-overlap bound. The argument applies to general class-conditional densities. Gaussian models provide a useful geometric special case~\cite{devroyeProbabilisticTheoryPattern1996,fukunagaIntroductionStatisticalPattern1990,bishopPatternRecognitionMachine2006}.

\subsection{Joint I--Q Readout Signals and Exact Risk}

Fix a joint I--Q readout vector $\mathbf{X}^{(D)}\in\mathbb{R}^{2D}$ and a binary label $Y$. For $d\le D$, let $f_y^{(d)}$ be the density of its first $d$ channels conditioned on $Y=y$, with priors $\pi_y=P(Y=y)$. All lower-dimensional densities below are marginals of this same joint law. The Bayes risk is
\begin{equation}\label{eq:bayes_risk}
R_{2d}^*=\int\min\{\pi_0f_0^{(d)}(\mathbf{x}),
                         \pi_1f_1^{(d)}(\mathbf{x})\}\,d\mathbf{x}.
\end{equation}

Write $\mathbf{u}=\mathbf{x}^{(d-1)}$ and $\mathbf{z}=\mathbf{x}_d$. Since the integral of a pointwise minimum is no larger than the minimum of the two integrals,
\begin{equation*}
\begin{split}
R_{2d}^*
&=\int\!\int\min\{\pi_0f_0^{(d)}(\mathbf{u},\mathbf{z}),
                     \pi_1f_1^{(d)}(\mathbf{u},\mathbf{z})\}\,d\mathbf{z}\,d\mathbf{u}\\
&\le\int\min\{\pi_0f_0^{(d-1)}(\mathbf{u}),
               \pi_1f_1^{(d-1)}(\mathbf{u})\}\,d\mathbf{u}\\
&=R_{2(d-1)}^*.
\end{split}
\end{equation*}
Geometrically, retaining the added coordinate allows a finer decision boundary. The original classifier remains available by ignoring that coordinate, so optimal performance cannot worsen.

Strict improvement requires more than distinct conditional distributions. To state the condition precisely, define
\begin{equation*}
h(\mathbf{u},\mathbf{z})=
\pi_1f_1^{(d)}(\mathbf{u},\mathbf{z})-
\pi_0f_0^{(d)}(\mathbf{u},\mathbf{z}).
\end{equation*}
The reduction in risk is
\begin{equation}\label{eq:strict_risk_condition}
\begin{split}
R_{2(d-1)}^*-R_{2d}^*
=\frac12\int\Bigg[&\int|h(\mathbf{u},\mathbf{z})|\,d\mathbf{z}\\
&-\left|\int h(\mathbf{u},\mathbf{z})\,d\mathbf{z}\right|\Bigg]d\mathbf{u}.
\end{split}
\end{equation}
This quantity is positive precisely when, on a set of $\mathbf{u}$ of positive measure, $h(\mathbf{u},\cdot)$ has both positive and negative parts with nonzero integrals. In that case, the added observation resolves a single marginal decision into regions favoring different classes. Conditional distributions can differ without crossing this decision threshold, leaving the exact risk unchanged.

\subsection{Conditional Overlap and the Product Bound}

Define the Bhattacharyya coefficient~\cite{fukunagaIntroductionStatisticalPattern1990}
\begin{equation}\label{eq:rho}
\rho_d=\int\sqrt{f_0^{(d)}(\mathbf{x})f_1^{(d)}(\mathbf{x})}\,d\mathbf{x}.
\end{equation}
The inequality $\min(a,b)\le\sqrt{ab}$ gives
\begin{equation*}
R_{2d}^*\le\sqrt{\pi_0\pi_1}\,\rho_d.
\end{equation*}
For the balanced priors in Theorem~\ref{app:main}, the prefactor is $1/2$. Equality is allowed, including when the two class distributions are disjoint and both sides vanish.

On the shared marginal support, factor the densities as
\begin{equation*}
f_y^{(d)}(\mathbf{u},\mathbf{z})
=f_y^{(d-1)}(\mathbf{u})f_y(\mathbf{z}\mid\mathbf{u}).
\end{equation*}
The conditional coefficient is
\begin{equation}\label{eq:gamma_factor}
\gamma_d(\mathbf{u})=
\int\sqrt{f_0(\mathbf{z}\mid\mathbf{u})
           f_1(\mathbf{z}\mid\mathbf{u})}\,d\mathbf{z}.
\end{equation}
By Cauchy--Schwarz, $0\le\gamma_d(\mathbf{u})\le1$, with equality at one exactly when the two conditional densities agree almost everywhere. Substitution into Eq.~\eqref{eq:rho} yields
\begin{equation}\label{eq:rho_factor}
\rho_d=\int\sqrt{f_0^{(d-1)}(\mathbf{u})f_1^{(d-1)}(\mathbf{u})}
             \,\gamma_d(\mathbf{u})\,d\mathbf{u}.
\end{equation}
Thus, only locations where the marginal distributions overlap contribute to the contraction.

Let $\nu_{d-1}$ be the finite measure with density
$\sqrt{f_0^{(d-1)}f_1^{(d-1)}}$ relative to $d\mathbf{u}$. The uniform contraction assumption means
\begin{equation*}
\gamma_d(\mathbf{u})\le\gamma_{\max,d}<1
\quad\text{for }\nu_{d-1}\text{-almost every }\mathbf{u}.
\end{equation*}
Equivalently, the essential supremum of $\gamma_d$ under this overlap measure is below one. If $\nu_{d-1}$ is zero, the preceding observations already separate the classes and the risk is zero. Otherwise, the assumption is additional to pointwise separation: values below one at every location may still approach one arbitrarily closely.

Applying this bound in Eq.~\eqref{eq:rho_factor} gives
\begin{equation}\label{eq:recurrence}
\rho_d\le\gamma_{\max,d}\rho_{d-1}.
\end{equation}
Iteration and the balanced-prior risk bound establish
\begin{equation}\label{eq:geometric_bound}
R_{2d}^*\le\frac12\rho_d
\le\frac12\rho_1\prod_{k=2}^d\gamma_{\max,k},
\end{equation}
proving Eq.~\eqref{eq:thm_geometric_bound}. A common $q\in(0,1)$ with $\gamma_{\max,k}\le q$ gives geometric decay of this upper bound. Individual factors below one do not by themselves ensure a common geometric rate, and contraction of the upper bound does not establish strict reduction of the exact risk.

\subsection{Gaussian Geometry and Physical Interpretation}

For nonsingular Gaussian laws, the Bhattacharyya distance is~\cite{fukunagaIntroductionStatisticalPattern1990}
\begin{equation}\label{eq:bhattacharyya_dist}
\begin{split}
D_B={}&\frac18(\boldsymbol\mu_1-\boldsymbol\mu_0)^T
              \boldsymbol\Sigma^{-1}(\boldsymbol\mu_1-\boldsymbol\mu_0)\\
&+\frac12\log\frac{\det\boldsymbol\Sigma}
 {\sqrt{\det\boldsymbol\Sigma_0\det\boldsymbol\Sigma_1}},
\qquad
\boldsymbol\Sigma=\frac{\boldsymbol\Sigma_0+\boldsymbol\Sigma_1}{2}.
\end{split}
\end{equation}
Their coefficient is $\exp(-D_B)$. Both terms are nonnegative: the first is a positive-definite quadratic form, and the second follows from concavity of the log determinant. Hence $D_B=0$ exactly when both means and both covariance matrices coincide. Separation may arise through means, covariances, or both.

If each class-conditional joint law of $(\mathbf{u},\mathbf{z})$ is Gaussian, its conditional law of $\mathbf{z}$ given $\mathbf{u}$ is Gaussian~\cite[Sec.~2.3.1]{bishopPatternRecognitionMachine2006}. Equation~\eqref{eq:bhattacharyya_dist}, evaluated using the conditional means and covariances, then gives $\gamma_d(\mathbf{u})=\exp[-D_{B,d}(\mathbf{u})]$. A uniform lower bound $D_{B,d}(\mathbf{u})\ge\delta_d>0$ on the overlap support is sufficient for contraction with $\gamma_{\max,d}=e^{-\delta_d}$. Pointwise positivity alone is insufficient.

A Gaussian marginal does not imply a Gaussian conditional law. Moreover, conditioning a Gaussian mixture generally produces another mixture, so the single-Gaussian formula cannot be applied directly to it. The density-based proof above applies to such mixtures without this restriction.

The requirement that readout vectors share one fixed joint distribution is essential when interpreting hardware results. Retaining an auxiliary channel from the same acquisition adds information without changing the marginal target distribution. Executing a larger encoding circuit can instead change that distribution through additional gates, initialization, and measurement activity. Comparisons across physical encoding distances must therefore assess these implementation costs separately from the theorem.

\begin{claim}\label{app:universal_approximation}
Let $\mu$ be the input probability distribution on $\mathbb{R}^{2d}$, and let $\mathcal{H}_{2d}$ be a family of continuous network scores dense in $L^1(\mu)$. For any $\epsilon>0$, there is an $h_\theta\in\mathcal{H}_{2d}$ whose thresholded classifier $g_\theta(\mathbf{x})=\mathbf{1}\{h_\theta(\mathbf{x})\ge1/2\}$ has risk $R(g_\theta)=P(g_\theta(X)\ne Y)\le R_{2d}^*+\epsilon$.
\end{claim}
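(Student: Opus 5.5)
The plan is to approximate the regression function $\eta(\mathbf{x})=P(Y=1\mid X=\mathbf{x})$ by a network score and then use the standard plug-in inequality that bounds excess classification risk by the $L^1(\mu)$ error of the score. We do not need pointwise convergence of the network to $\eta$. The density of $\mathcal{H}_{2d}$ in $L^1(\mu)$, assumed in the claim, is exactly what makes this work.

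First, $\eta$ is a measurable function with values in $[0,1]$. Since $\mu$ is a probability measure, $\eta\in L^1(\mu)$. The Bayes classifier of Definition~\ref{app:bayes_risk} in the binary case is $g^*(\mathbf{x})=\mathbf{1}\{\eta(\mathbf{x})\ge 1/2\}$, with risk $R^*_{2d}$ given by Eq.~\eqref{eq:bayes_risk}.

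Next, for any measurable score $h$ with plug-in classifier $g_h=\mathbf{1}\{h\ge 1/2\}$, we establish the identity
\begin{equation*}
R(g_h)-R^*_{2d}=\int |2\eta(\mathbf{x})-1|\,\mathbf{1}\{g_h(\mathbf{x})\ne g^*(\mathbf{x})\}\,d\mu(\mathbf{x}).
\end{equation*}
This follows by conditioning on $X=\mathbf{x}$. The conditional error of predicting $1$ is $1-\eta$, and that of predicting $0$ is $\eta$, so the two differ by $|2\eta-1|$. Whenever $g_h(\mathbf{x})\ne g^*(\mathbf{x})$, the values $h(\mathbf{x})$ and $\eta(\mathbf{x})$ lie on opposite sides of $1/2$, or one of them sits on the threshold. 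In either case $|\eta-1/2|\le|h-\eta|$, which gives
\begin{equation*}
R(g_h)-R^*_{2d}\le 2\int|h-\eta|\,d\mu=2\|h-\eta\|_{L^1(\mu)}.
\end{equation*}
This is the usual bound of \cite[Theorem 2.2]{devroyeProbabilisticTheoryPattern1996}.

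Finally, by density we choose $h_\theta\in\mathcal{H}_{2d}$ with $\|h_\theta-\eta\|_{L^1(\mu)}\le\epsilon/2$. The bound then gives $R(g_\theta)\le R^*_{2d}+\epsilon$.

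The main obstacle is the middle step, specifically the tie conventions. With the $\ge$ in both classifiers, the tie case $\eta=1/2$ contributes zero weight $|2\eta-1|$, and the case $h=1/2\ne\eta$ still satisfies the inequality. So the bound should go through, but it requires a careful case check.

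A secondary point concerns how the claim connects to the Proposition. The Proposition asks for outputs in $[0,1]$, such as a sigmoid output. For this we note that clipping $h$ to $[0,1]$ leaves the classifier unchanged and does not increase $|h-\eta|$, since $\eta\in[0,1]$. We also note that standard universal approximation theorems on compacta, together with tightness of $\mu$, supply density in $L^1(\mu)$. This is only a remark, because the claim takes that density as a hypothesis.
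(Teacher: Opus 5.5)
Your proof is correct, but it approximates a different target from the one the paper uses.

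The paper approximates the Bayes decision itself, $g^*=\mathbf{1}\{\pi_1 f_1\ge\pi_0 f_0\}$, in $L^1(\mu)$. Because $g^*$ is $\{0,1\}$-valued, any disagreement between the thresholded decisions forces $|h_\theta-g^*|\ge 1/2$. The crude bound $R(g_\theta)-R^*_{2d}\le P(g_\theta(X)\ne g^*(X))$ plus Markov's inequality then finishes, with the same factor $2$ and the same choice $\epsilon/2$. That route is more elementary: it needs no regression function and no excess-risk identity. It also needs no tie analysis, since the target never sits on the threshold.

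Your route approximates $\eta$ and uses the plug-in bound $R(g_h)-R^*_{2d}\le 2\|h-\eta\|_{L^1(\mu)}$. It costs the tie-case check, which you handle correctly, but it buys two things. First, the identity weighted by $|2\eta-1|$ is sharper: it shows that disagreements near the decision boundary are nearly free. Second, your approximating score is itself close to the posterior. That is what binary cross-entropy training targets, and what the paper's neural posterior confidence $|P(Y=1\mid\mathbf{x})-1/2|$ presupposes. A score close to the discontinuous indicator $g^*$ has no such probabilistic meaning.

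Your closing remark about compacta and tightness is not needed, since density is a hypothesis of the claim. As stated, it would also require controlling the network off the compact set, for instance by the clipping you mention. The paper instead cites Hornik's theorems, which give $L^p(\mu)$ density directly for finite $\mu$ and bounded nonconstant activations.
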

\begin{proof}
Choose a measurable Bayes decision $g^*(\mathbf{x})=\mathbf{1}\{\pi_1 f_1(\mathbf{x})\ge\pi_0 f_0(\mathbf{x})\}$. The input distribution is a finite measure because its total mass is one. A standard sufficient approximation family uses arbitrarily many hidden units with a bounded, continuous, nonconstant activation, such as a sigmoid~\cite[Theorems~1, 2]{hornikApproximationCapabilitiesMultilayer1991}. By $L^1(\mu)$ density, choose $h_\theta$ with $\int|h_\theta-g^*|\,d\mu<\epsilon/2$. Whenever the thresholded decisions differ, $|h_\theta-g^*|\ge1/2$. Hence
\begin{equation*}
\begin{split}
R(g_\theta)-R_{2d}^*
&\le P(g_\theta(X)\ne g^*(X))\\
&\le2\int|h_\theta-g^*|\,d\mu<\epsilon.
\end{split}
\end{equation*}
This is an existence result for an expressive family, not a finite-training guarantee.
\end{proof}

\section{Experimental Setup and Device Characteristics}\label{app:experimental_setup}

\subsection{Devices, Circuits, and Reference Labels}\label{app:workload_definitions}

The experiments use \texttt{ibm\_fez}, \texttt{ibm\_kingston}, and \texttt{ibm\_marrakesh}, with the device coverage specified in each figure. Preparation labels identify reference circuits. They are not independent verification of each shot's instantaneous physical state. Static SPAM references prepare $0/1$ for single-channel readout or $00/11$ for joint I--Q readout. Contextual references include preceding workload operations and are matched to the measurement cycle.

QR-Stabilizer (Fig.~\ref{fig:mcm_workloads}(a)) uses data qubit 51, phase-kickback ancilla 52, and, for \textit{Lifting}, proxy 50. The routed reference variant also uses proxy 58. Five cycles apply Hadamard (H), controlled-NOT (CX, equivalent to CNOT), and H gates to the data qubit and measure it, with a reset before cycles two through five. Reference preparation selects the ideal all-zero or all-one trace. The target trace is $11111$. Each reference circuit is sampled 20,000 times, and each target circuit 10,000 times. The SPAM scatter plot displays a sampled subset of 10,000 shots per reference class.

QEC-5 is a parity-check workload with $N=5$ rounds. It prepares a Bell pair on data qubits 49 and 50 and measures their $ZZ$ parity through syndrome qubit 51. \textit{Lifting} adds proxy 52. The syndrome qubit and proxy are reset for each round. Even- and odd-parity references train the contextual decoders. Figure~\ref{fig:mcm_workloads}(b) shows the conditional correction: after measuring parity bit $c_j$, an $X$ correction is applied to the first data qubit if $c_j=1$. Here $E_j$ denotes a possible bit flip on that qubit. A single parity check detects odd parity but cannot identify which of two data qubits flipped. The displayed correction therefore assumes the designated error location used in this workload.

The hardware executions defer this correction and retain the uncorrected syndrome history. The target inserts an $X$ error before the fourth round, giving ideal syndrome trace $00011$. To describe the corresponding Pauli frame, let $s_j$ be a decoded syndrome, set $s_0=0$, and define the detection event $e_j=s_j\oplus s_{j-1}$. The accumulated frame bit is
\begin{equation}
  f_N=\bigoplus_{j=1}^{N}e_j=s_N,
  \qquad F_N=X_{49}^{f_N}.
\end{equation}
Thus the inferred correction can be recorded classically instead of applying a conditional gate after each round. In a terminal computational-basis measurement, applying this frame would replace the raw first-data-qubit bit $b_{49}$ by $b_{49}\oplus f_N$. The plotted software-frame criterion checks the inferred frame bit against its ideal value. Because the sequence of exclusive-OR (XOR) operations telescopes, this is the final-syndrome classification metric in Fig.~\ref{fig:combined_workload_results}(b). It does not evaluate a physically corrected Bell state. With active correction, the ideal syndrome would return to zero after the correction, whereas the uncorrected hardware trace remains odd after the injected error.

The teleportation-related circuit uses payload 50, Alice 51, Alice proxy 52, Bob 58, and Bob proxy 71. Figure~\ref{fig:mcm_workloads}(c) shows teleportation of an arbitrary input $\ket{\psi}$. After Bell-pair preparation and the Bell measurement, outcomes $c_{\mathrm{P}}$ and $c_{\mathrm{A}}$ leave Bob in $X^{c_{\mathrm{A}}}Z^{c_{\mathrm{P}}}\ket{\psi}$, up to a global phase. Classical feedforward applies $X$ conditioned on $c_{\mathrm{A}}$ and then $Z$ conditioned on $c_{\mathrm{P}}$, recovering $\ket{\psi}$. Equivalently, the two bits specify a Pauli frame that can be propagated through subsequent Clifford operations and used to reinterpret final measurements. For a computational-basis readout, only the $X$ component changes the measured bit: $b_{\mathrm{out}}=b_{\mathrm{B}}\oplus c_{\mathrm{A}}$. The $Z$ component affects phase and is needed for general quantum-state recovery, but does not change this measurement probability.

In the experiments, reference circuits prepare payload $\ket{0}$ or $\ket{1}$. The target independently acquires the same circuit as the input-$\ket{1}$ reference. Alice--Bob entanglement and the Bell-measurement operations precede Bob's delayed readout. There is no physical outcome-conditioned $X/Z$ correction. Instead, the neural decoder uses the joint Alice--Bob I--Q readout signals, together with the proxy channels specified in Appendix~\ref{app:workload_decoders}, to infer the input bit directly. This learns the computational-basis input classification associated with the frame relation rather than explicitly XORing separately thresholded bits. The payload measurement is acquired but is not a decoder feature, consistent with the phase frame being irrelevant to this basis-specific metric. The experiment therefore evaluates offline input classification, not arbitrary-state teleportation fidelity.

QEC-5 and teleportation acquisitions use a delay of 2000 ns on Fez and 2480 ns on Marrakesh to model the measurement and classical-processing window. Each reference or target circuit has 10,000 shots, collected in five shuffled blocks of 2000. The conditional gates in Fig.~\ref{fig:mcm_workloads}(b,c) specify the corrections accounted for by Pauli-frame tracking in the experiments. \textit{Lifting} auxiliary wires are omitted from the schematics.

\begin{figure*}[t]
  \centering
  \subfigure[QR-Stabilizer.]{\makebox[0.32\textwidth][c]{\includegraphics[height=1.5in]{qr_stab.pdf}}}
  \hfill
  \subfigure[QEC-5.]{\makebox[0.32\textwidth][c]{\includegraphics[height=1.5in]{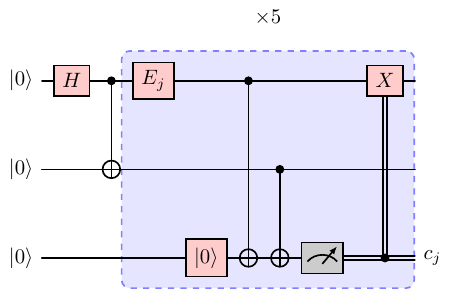}}}
  \hfill
  \subfigure[Teleportation.]{\makebox[0.32\textwidth][c]{\includegraphics[height=1.5in]{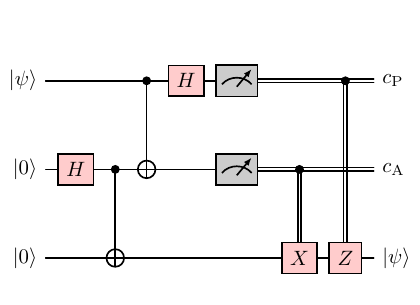}}}
  \caption{MCM workloads. (a) Repeated stabilizer measurement with reset and reuse. (b) Repeated parity extraction with an outcome-conditioned $X$ correction. (c) Teleportation with outcome-conditioned $X$ and $Z$ corrections. Double lines carry classical measurement outcomes. \textit{Lifting} auxiliary wires are omitted. The experiments account for the corrections in (b,c) through the Pauli-frame interpretation described in the text.}
  \label{fig:mcm_workloads}
\end{figure*}

Figure~\ref{fig:workload_topology} shows the physical subgraph used by these layouts. The native connections are $49$--$50$, $50$--$51$, $51$--$52$, $51$--$58$, and $58$--$71$. In particular, qubits 49 and 51 are not directly coupled: their logical parity-check interaction requires routing or gate decomposition through the hardware connectivity. The graph distinguishes native couplings from the logical gates in the circuit schematics.

\begin{figure}[t]
  \centering
  \includegraphics[width=\columnwidth]{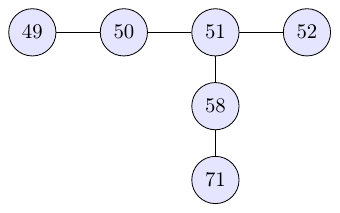}
  \caption{Physical connectivity used by the MCM workloads on \texttt{ibm\_fez}, \texttt{ibm\_kingston}, and \texttt{ibm\_marrakesh}. Nodes identify physical qubits, and edges denote native couplings. The workload-specific qubit assignments are given in the text.}
  \label{fig:workload_topology}
\end{figure}

\subsection{Decoders and Evaluation Populations}\label{app:workload_decoders}

All three MCM workloads---QR-Stabilizer, QEC-5, and teleportation---use the same decoder architecture and training settings. The PyTorch decoder has one hidden layer of 32 neurons with rectified linear unit (ReLU) activations. Single-channel models use one binary head, and \textit{Lifting} uses two heads supervised with labels $00/11$. Reported predictions use the data-qubit head. Training uses Adam with learning rate 0.003, batches of 256, 50 epochs, and an 80/20 shuffled reference split.

For QR-Stabilizer, features are scaled by a power of two derived from their maximum absolute value. Pooled references set the training scale, while each inference batch computes its own scale. Figure~\ref{fig:qr_stab_acc_per_meas} compares target-shot accuracy for NN-based QSD with held-out calibration accuracy for \textit{Lifting} with \textit{in situ} calibration. The two populations are distinct.

For QEC-5 and teleportation, features are standardized using a scaler fitted on pooled references. For QEC-5, NN-based QSD uses the syndrome channel of the encoded target circuit and discards its proxy features, with either static SPAM or \textit{in situ} calibration. \textit{Lifting} uses joint syndrome--proxy references acquired under the corresponding calibration condition. For teleportation, NN-based QSD uses the same four Alice--Bob I--Q features under either calibration condition. Both \textit{Lifting} variants use \textit{in situ} calibration. Adding Alice's proxy gives six inputs, while including both Alice's and Bob's proxies gives eight. The two unfiltered \textit{Lifting} bars therefore compare different input channels rather than an isolated calibration change. In the teleportation panel, the compact legend labels ``Lift'' and ``Lift (in situ)'' identify the Alice-proxy and both-proxy variants, respectively.

In Fig.~\ref{fig:combined_workload_results}(a), NN-based QSD and \textit{Lifting} use single-channel and joint I--Q references, respectively. Each is evaluated with static SPAM calibration and with cycle-matched \textit{in situ} calibration. Circuit success in Fig.~\ref{fig:combined_workload_results} is defined separately for each workload. QR-Stabilizer whole-trace accuracy is the fraction of target shots for which all five predictions equal the ideal trace. QEC-5 accuracy is final-syndrome correctness. The reported software-frame XOR reduces to that final bit. Teleportation-related accuracy is the fraction of input-$\ket{1}$ target shots classified as $1$. QR-Stabilizer filtering fits Gaussian reference distributions with empirical priors and covariance regularizer $10^{-6}I$, and requires posterior confidence at least 0.75 in every cycle. Its coverage is data dependent. The other two workloads rank shots by $|P(Y=1\mid\mathbf{x})-1/2|$ and retain the most confident 90\%, with ties potentially increasing coverage. Filtered accuracies use retained shots as their denominator.

\subsection{Distributional Estimates and Comparator Definitions}

The Bayes-risk curves are equal-prior plug-in estimates from Gaussian kernel densities fitted to normalized I--Q samples. In Fig.~\ref{fig:main_readout}(d), each target-cycle distribution is compared with the SPAM-$0$ reference, so this constructed discrimination task differs from a contemporaneous two-class workload acquisition. The digital-error curve transfers the measured accuracy offset between NN-based QSD and digital readout from a separate Fez acquisition to the plotted target-shot NN-based QSD accuracy. Figure~\ref{fig:bayes_hardware_val} instead compares target-only and joint KDE risk estimates. Its signed histograms come from quadratic discriminant scores.

The distance sweep compares native assignment accuracy, hard decoding of spatial repetition readouts, neural decoding of sequential temporal observations, and neural decoding of joint I--Q readout signals from spatial encoding. Spatial distance increases physical-qubit count, whereas temporal distance increases the number of measurements on one qubit. The fitted accuracy curve uses $\kappa=0.45$ and $\alpha=9.03$ for the displayed Fez configurations. These fitted parameters describe that sweep rather than establish the conditional-overlap assumptions on hardware.

\bibliography{refs}

\end{document}